\newcommand{\ket}[1]{\left| #1 \right>} 
\newcommand{\bra}[1]{\left< #1 \right|} 
\let\baraccent=\= 
\renewcommand{\=}[1]{\stackrel{#1}{=}} 
\newcommand{\trace}{\mathrm{Tr}}
\newcommand{\boldvec}[1]{\underline{\bf#1}}
\newcommand{\hilb}[1]{\mathscr{H}_{#1}}
\newcommand{\real}[0]{\mathbb{R}}
\newcommand{\powset}[1]{\mathbb{P}(#1)}
\newtheorem{prop}{Proposition}
\newtheorem{thm}[prop]{Theorem}
\newtheorem*{thm1'}{Theorem 1$'$}
\newtheorem*{thm2'}{Theorem 5$'$}
\theoremstyle{definition}
\theoremstyle{remark}
\newcommand{\dotcup}{\ensuremath{\mathaccent\cdot\cup}}
\renewcommand{\emptyset}[0]{\text{\O}}
\begin{document}

\title{Infinitely many constrained inequalities for the von Neumann entropy}

\author{Josh Cadney}
\email{josh.cadney@bristol.ac.uk}
\affiliation{Department of Mathematics, University of Bristol, Bristol BS8 1TW, U.K.}

\author{Noah Linden}
\affiliation{Department of Mathematics, University of Bristol, Bristol BS8 1TW, U.K.}

\author{Andreas Winter}
\affiliation{Department of Mathematics, University of Bristol, Bristol BS8 1TW, U.K.}
\affiliation{Centre for Quantum Technologies, National University of Singapore, Singapore 117542}

\date{23 June 2011}

\begin{abstract}
  We exhibit infinitely many new, constrained inequalities for the von Neumann
  entropy, and show that they are independent of each other and the known
  inequalities obeyed by the von Neumann entropy (basically strong subadditivity).
  The new inequalities were proved originally by Makarychev \emph{et al.}
  [Commun. Inf. Syst., 2(2):147-166, 2002]
  for the Shannon entropy, using properties of probability distributions.
  Our approach extends the proof of the inequalities to the quantum domain,
  and includes their independence for the quantum and also the classical cases.
\end{abstract}

\maketitle

\section{Introduction}
The von Neumann entropy, given by $S(\rho)=-\trace{\rho\log\rho}$ for a quantum 
state (density operator) $\rho$, is one of the cornerstones of quantum information
theory. It plays an essential role in the expressions for the best achievable rates
of virtually every coding theorem. In particular, when proving the optimality of these
expressions, it is the inequalities governing the relative magnitudes of the
entropies of different subsystems which are important. There are essentially two
such inequalities known, the so called basic inequalities:
\begin{align}
I(A:B|C) := -S(C)+S(AC)+S(BC)-S(ABC) &\geq 0, \tag{SSA}\label{SSA}\\
S(AB)+S(AC)-S(B)-S(C)    &\geq 0. \tag{WMO}\label{weakmono}
\end{align}
Inequality \eqref{SSA} is known as \emph{strong subadditivity}
and was proved by Lieb and Ruskai \cite{LR73} and the expression on the
left hand side as the (quantum) conditional mutual information;
inequality \eqref{weakmono} is usually called \emph{weak monotonicity},
and it is in fact equivalent to (SSA) -- see section \ref{notation} below.

To be precise, we will be considering only \emph{linear} inequalities involving
the entropies of various reduced states of a multi-party quantum state, as we
shall explain, and partly motivate now.
Given a multipartite state $\rho$ on a set of parties (quantum systems) $N=\{X_1,\ldots,X_n\}$,
we can think of the entropy as a function which assigns a real number to each subset of $N$,
i.e. $S(.)_\rho:\powset{N}\rightarrow\real$ with $S(J)_\rho:=S(\rho_J)$.
(We will use the notation $S(J)_\rho$ and $S(\rho_J)$ interchangeably).
Further, with each function $f:\powset{N}\rightarrow\real$, which satisfies $f(\emptyset)=0$,
we can associate a vector in $2^n-1$ dimensional real space: $(f(X_1),f(X_2),\ldots,f(X_1\ldots X_n))$.
It is then natural to ask the question: which vectors can arise as the entropies of quantum states?
For example, the vector $(1,1,2)$ is the entropy vector of the maximally mixed state on
two spin-$\frac12$ systems. However, we know that the vector $(1,1,3)$ cannot represent
the entropies of any quantum state since in general the quantity
$S(X_1)_\rho+S(X_2)_\rho-S(X_1X_2)_\rho$ is non-negative,
whereas here it is equal to $-1$.
Thus, the question of which vectors can be realised by quantum states
is inextricably linked to the knowledge of entropy inequalities. Indeed, for $n=2$ and $n=3$
it has been shown \cite{Pip03} that the closure of the set of achievable vectors, which we
will denote $\overline{\Sigma}_n^*$, is exactly the cone in $\real^{2^n-1}$ cut out by the
basic inequalities, denoted $\Sigma_n$. In other words, a vector can be realised, with
arbitrary accuracy, as the entropy vector of a quantum state if and only if it satisfies
all the basic inequalities. For $n\geq 4$ it can again be shown that
$\overline{\Sigma}_n^*$ forms a convex cone, however, it is unknown whether or not
this cone is the same as that which is determined by the basic inequalities.

In classical information theory, the Shannon entropy of a random variable, given by
$H(X)=-\sum_{x\in\mathcal{X}}p_X(x)\log p_X(x)$, plays an analagous role to the von
Neumann entropy. It satisfies the same basic inequalities as above, with \eqref{weakmono}
replaced by the stronger condition of \emph{monotonicity}: $H(AB)\geq H(A)$.
The analogous classical problem to the one we study here has been extensively studied for quite some time.
First Zhang and Yeung \cite{YZ98}, and then Makarychev \emph{et al.} \cite{Mak02}
and Dougherty \emph{et al.} \cite{Dou06} found new inequalities, which are not implied by
the basic inequalities. Mat\'{u}\v{s} \cite{Mat07} even proved that for $n\geq 4$ the
classical entropy cone is not polyhedral, i.e.~it cannot be described by any finite
set of linear inequalities.

In the quantum case, only one inequality is known which cannot be deduced from the basic inequalities \cite{LW05}, and it is a so-called constrained inequality -- an inequality which holds whenever certain conditional mutual informations are zero. This shows that parts of certain faces of the cone $\Sigma_n$ do not contain any entropy vectors of quantum states (noting that $\Sigma_n$, being defined by finitely
many linear inequalities, is a polyhedral cone).
This is not enough, however, to conclude that the entropy cone, $\overline{\Sigma}_n^*$, is strictly smaller, as we are concerned with the closure. In fact, it remains a major open problem to decide
the existence of an unconstrained inequality for the von Neumann entropy that is not
implied by the basic inequalities.

Here we make progress in two different directions; we prove an infinite family of constrained inequalities, which are provably independent, and we do so with a strictly smaller set of constraints.
The structure of the remainder of the paper is as follows: in section \ref{notation} we introduce
notation and review the basic linear framework of entropy inequalities.
In section \ref{result} we prove that a family of constrained inequalities are true for
the von Neumann entropy of quantum states; in section \ref{independence} we show
that this family is mutually independent; in section \ref{variations} we exhibit some
alternate forms of the inequalities; and in section \ref{discussion} we conclude and
mention some open problems.

\section{Entropy Inequalities} \label{notation}
In this section we explain preciesely what it means for entropy inequalities to hold, and
to be independent of one another. Consider the inequality of strong subadditivity \eqref{SSA},
\begin{equation}
  -S(C)+S(AC)+S(BC)-S(ABC) \geq 0,
\end{equation}
which holds for all quantum states on a Hilbert space $\hilb{A}\otimes\hilb{B}\otimes\hilb{C}$.
By swapping the labels of $\hilb{A}$ and $\hilb{C}$ we obtain another inequality:
$-S(A)+S(AC)+S(AB)-S(ABC)\geq0$. Alternatively, suppose we have a four party quantum state on
$\{A,B,C,D\}$. Then we can think of $\rho_{ABCD}$ as a tripartite state on
$\hilb{A}\otimes\hilb{BC}\otimes\hilb{D}$ and so we obtain the inequality
$-S(D)+S(AD)+S(BCD)-S(ABCD)\geq0$. We could even think of a bipartite state $\rho_{AB}$
as a tripartite state on $\{A,B,C\}$ with $\hilb{C}$ trivial; in this case we obtain
\begin{equation}
  I(A:B) := S(A)+S(B)-S(AB) \geq 0, \label{mutualinfo}
\end{equation}
which sometimes is considered another basic inequality because of the
importance of the expression on the left, the (quantum) mutual information,
although the above reasoning shows that it is not really necessary.
Generally, if we have a state on a set of parties $N$, then for each disjoint triple
$\alpha,\beta,\gamma\subseteq N$ we obtain a different \emph{instance} of SSA.
A function which satisfies all instances of \eqref{SSA} is called \emph{submodular}
\cite{Oxley}.

Weak monotonicity has similarly as special instances
\begin{align}
  S(AB)+S(A)-S(B) &\geq 0, \label{triangle}\\
  S(A)            &\geq 0, \label{pos}
\end{align}
the first known as \emph{triangle inequality}; they are obtained from
\eqref{weakmono} by making $\hilb{C}$ trivial, and both $\hilb{B}$ and $\hilb{C}$
trivial, respectively.

More generally, an entropy inequality is an expression of the form
\begin{equation}\label{ineqdef}
  L(X_1,\ldots,X_k) = \sum_{\omega\in\powset{K}}\chi_\omega S(X_\omega) \geq 0,
\end{equation}
for some $k\in \mathbb{N}$ where $K=\{1,\ldots,k\}$, $\chi_\omega\in\real$ and $X_\omega=\bigcup_{i\in\omega}X_i$. An \emph{instance} of the inequality is the expression obtained by fixing a ground set of parties, $N$, and substituting $X_1,\ldots,X_k$ for $k$ disjoint subsets of $N$ in \eqref{ineqdef}. We then say that the inequality $\tilde{L}\geq 0$ is \emph{independent} of the inequalities $L_1,\ldots,L_m \geq 0$
whenever some instance of $\tilde{L}$ cannot be written as a positive linear combination of instances of $L_1,\ldots,L_m$.

In section \ref{independence} we will prove that a family of \emph{constrained} inequalities
are independent of each other, and of the basic inequalities. A constrained inequality can be thought of in the same way as above, but it is required to hold only when the constraints, $C_i$, are equal to zero:
\begin{equation}
  C_i(X_1,\ldots,X_k) = \sum_{\omega\in\powset{K}} \eta_\omega^{(i)} S(X_\omega) = 0,
\end{equation}
for all $i$. We say that the inequality $\tilde{L}\geq0$ (with constraints $\{C_i\}$) is independent of the inequalities $L_1,\ldots,L_m\geq0$ (each with some subset of $\{C_i\}$ as constraints) whenever some instance of $\tilde{L}$ cannot be written as a positive
linear combination of instances of $L_1,\ldots,L_m$ plus an arbitrary linear combination of the $C_i$. (Here we only consider instances of $L_1,\ldots,L_m$ with constraints matching the particular instance of $\tilde{L}$.)

A slight caveat to this definition of independence is provided by the following observation.
Notice that one instance of \eqref{SSA} applied to a purification $\psi_{ABCD}$ of the state
$\rho_{ABC}$ is
\begin{equation}
  -S(B)+S(AB)+S(BD)-S(ABD) \geq 0.
\end{equation}
Using the property of pure states that $S(J)_\psi=S(J^c)_\psi$, where $J^c = N\setminus J$
is the complement of $J$ in $N$, we can eliminate $D$ from this inequality to obtain
\begin{equation}
  -S(B)-S(C)+S(AB)+S(AC) \geq 0,
\end{equation}
and so we have deduced \eqref{weakmono}.
However, in the sense defined above, the inequalities \eqref{SSA} and \eqref{weakmono}
are independent.

\section{Main Result}\label{result}
Our main result is the following theorem,
whose analogue was proved in \cite{Mak02} for the Shannon entropy:
\begin{thm} \label{main}
Let $\rho$ be a multipartite quantum state on parties
$\{A,B,C,X_1,\ldots,X_n\}$ which satisfies the constraints:
\begin{equation}
  I(A:C|B)_{\rho}=I(B:C|A)_{\rho}=0.
\end{equation}
Then the following inequality holds:
\begin{equation}
  S(X_1\ldots X_n)_\rho+(n-1)I(AB:C)_\rho \leq \sum_{i=1}^n S(X_i)_\rho+\sum_{i=1}^n I(A:B|X_i)_\rho.
\end{equation}
\end{thm}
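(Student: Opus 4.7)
First, I would unpack the two constraints: expanding $I(A:C|B)=0$ and $I(B:C|A)=0$ gives the equalities
\[
S(ABC)=S(AB)+S(BC)-S(B)=S(AB)+S(AC)-S(A),
\]
which immediately yield $I(AB:C)=I(A:C)=I(B:C)$, and also the useful consequence $I(A:B)\geq I(AB:C)$ coming from $I(A:BC)=I(A:B)+I(A:C|B)=I(A:B)\geq I(A:C)$. These give several equivalent forms of the $(n-1)I(AB:C)$ term that I can try to match against the right-hand side.

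My main strategy would be induction on $n$. The base case $n=1$ is immediate from SSA applied to the triple $(A,B,X_1)$. A naive ``drop $X_n$'' inductive step reduces the claim to
\[
I(AB:C)\;\leq\; I(X_1:X_2\ldots X_n)+I(A:B|X_1),
\]
which fails in general (take all $X_i$ trivial), so plain induction cannot close. I would instead symmetrise: apply the $(n-1)$-case to each of the $n$ size-$(n-1)$ subsets of $\{X_1,\ldots,X_n\}$, sum the resulting inequalities, and combine them with a separately-derived bound on $(n-1)I(AB:C)$ obtained from the SSA instances $I(AB:C|X_i)\geq 0$ together with the constraint equalities above. As an alternative, I would attempt a direct attack: rewrite the target as
\[
\sum_i\!\bigl[S(AX_i)+S(BX_i)-S(ABX_i)\bigr]-S(X_1\cdots X_n)-(n-1)\!\bigl[S(AB)+S(C)-S(ABC)\bigr]\geq 0,
\]
eliminate $S(ABC)$ using the two constraint equalities, and decompose the remainder as a nonnegative linear combination of SSA instances on triples drawn from $\{A,B,C,X_1,\ldots,X_n\}$.

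The main obstacle, in either approach, is that the constraint equalities live only on the $\{A,B,C\}$-marginal: they do \emph{not} imply $I(A:C|BX_i)=0$, so the Markov-style simplifications on $ABC$ cannot be pushed inside the $X_i$'s. The proof must therefore find exactly the right linear combination of SSA instances to be paired with the constraints, and the factor $(n-1)$ on $I(AB:C)$ strongly suggests that precisely $n$ SSA instances (one per $X_i$, e.g.\ on $(AB,C,X_i)$ and on $(A,B,X_i)$) combined with a single use of each of the two constraint equalities is the correct counting.
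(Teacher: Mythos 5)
There is a genuine and in fact fatal gap: every route you propose --- the base-case-plus-induction scheme, the symmetrised sum over $(n-1)$-subsets, and the ``direct attack'' --- ultimately tries to exhibit the target inequality as a nonnegative linear combination of SSA instances on subsets of $\{A,B,C,X_1,\ldots,X_n\}$ plus a linear combination of the two constraint equalities. The paper's own independence result (section \ref{independence}) shows this is impossible for every $n\geq 2$: there is an explicit submodular, monotone function $g$ on $\powset{\{a,b,c,x_1,\ldots,x_n\}}$ with $g(a{:}c|b)=g(b{:}c|a)=0$ that violates $C_n$. Since any such linear-combination derivation would force every submodular, monotone, constraint-satisfying function to obey $C_n$, no choice of SSA instances and multipliers can close your argument. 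Your final paragraph correctly senses the obstruction (the constraints live only on the $ABC$-marginal) but does not recognize that it kills the whole linear framework rather than merely requiring a cleverer combination.

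The missing idea is to use the constraints \emph{non-linearly}, via the structure theorem for states saturating strong subadditivity: $I(A:C|B)_\rho=I(B:C|A)_\rho=0$ forces $\rho_{ABC}=\bigoplus_k p_k\,\sigma_{AB}^{(k)}\otimes\sigma_C^{(k)}$, a direct sum over orthogonal subspaces of $\hilb{A}$. One then measures the block label $k$ locally at $A$ (without disturbing $\rho_{ABC}$) and records it in a classical register $R$. The resulting $(n+4)$-party state satisfies $S(R|A)=S(R|B)=0$, the monotonicity $S(JR)\geq S(J)$ for all $J$, and crucially $S(R)\geq I(AB:C)$. A chain of SSA applications on this \emph{extended} state --- bounding $S(R|X_i)\leq I(A:B|X_i)$ for each $i$ and $S(X_1\ldots X_n)\leq\sum_i S(X_i|R)+S(R)$ --- yields the inequality for the extended state, and locality of the measurement plus monotonicity of conditional mutual information under local maps transfers it back to $\rho$. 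Without introducing such an auxiliary system carrying the classical structure guaranteed by the constraints, the inequality cannot be reached.
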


Before commencing the proof, we state a result from \cite{HJPW04} which will be crucial.
\begin{prop}
A state $\rho_{ABC}$ on $\hilb{A}\otimes\hilb{B}\otimes\hilb{C}$ satisifies $I(A:C|B)_\rho=0$ if and only if there is a decomposition of system $B$ as
\begin{equation}
\hilb{B}=\bigoplus_{j}\hilb{b_j^L}\otimes\hilb{b_j^R}
\end{equation}
into a direct sum of tensor products, such that
\begin{equation}
\rho_{ABC}=\bigoplus_j q_j\rho_{Ab_j^L}^{(j)}\otimes\rho_{b_j^RC}^{(j)},
\end{equation}
with states $\rho_{Ab_j^L}^{(j)}$ on $\hilb{A}\otimes\hilb{b_j^L}$ and
$\rho_{b_j^RC}^{(j)}$ on $\hilb{b_j^R}\otimes\hilb{C}$,
and a probability distribution $\{q_j\}$.
\end{prop}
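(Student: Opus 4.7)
My plan is to treat the two implications separately. The forward (``if'') direction is a routine calculation, while the converse (``only if'') rests on Petz's equality condition for monotonicity of relative entropy and the structure theorem for finite-dimensional $*$-algebras.

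For the ``if'' direction, I would first observe that partial traces preserve the block-direct-sum structure, so
\begin{equation*}
\rho_B = \bigoplus_j q_j\,\rho^{(j)}_{b_j^L}\!\otimes\rho^{(j)}_{b_j^R},\quad
\rho_{AB} = \bigoplus_j q_j\,\rho^{(j)}_{Ab_j^L}\!\otimes\rho^{(j)}_{b_j^R},\quad
\rho_{BC} = \bigoplus_j q_j\,\rho^{(j)}_{b_j^L}\!\otimes\rho^{(j)}_{b_j^RC}.
\end{equation*}
For a direct sum $\bigoplus_j q_j\sigma_j$ the entropy is $H(q)+\sum_j q_j S(\sigma_j)$, and entropy is additive on tensor products. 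Substituting these into $I(A:C|B) = S(AB)+S(BC)-S(B)-S(ABC)$, the classical contribution $H(q)$ cancels four times over and the tensor-factor entropies pair up to cancel within each block, yielding $0$.

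For the ``only if'' direction, the main input is Petz's theorem: the equality $S(\rho\|\sigma)=S(\mathcal{N}(\rho)\|\mathcal{N}(\sigma))$ for a CPTP map $\mathcal{N}$ is equivalent to the existence of a CPTP recovery map reversing $\mathcal{N}$ simultaneously on $\rho$ and $\sigma$. Applying this with $\mathcal{N}=\trace_A$, $\rho=\rho_{ABC}$, and $\sigma=\rho_{AB}\otimes I_C/\dim\hilb{C}$, one checks that the associated relative entropy difference is (up to an additive constant) $I(A:C|B)$, so the vanishing of the conditional mutual information is exactly the equality case. Unwinding the recovery map yields the Petz identity
\begin{equation*}
\rho_{ABC} \;=\; \rho_{AB}^{1/2}\,\rho_B^{-1/2}\,\rho_{BC}\,\rho_B^{-1/2}\,\rho_{AB}^{1/2}
\end{equation*}
on the joint support, together with the two-sided commutation $[\rho_{AB}\otimes I_C,\,I_A\otimes\rho_{BC}]=0$ (obtained by comparing the Petz identity with its adjoint, or equivalently from commutativity of the associated modular flows $\rho_B^{it}\rho_{AB}^{-it}$ and $\rho_B^{it}\rho_{BC}^{-it}$ on $\hilb{B}$).

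With the operator identity in hand, I would extract the Hilbert space decomposition as follows. Let $\mathcal{M}_L\subseteq\mathcal{B}(\hilb{B})$ be the unital $*$-algebra generated by $\{\rho_B^{-1/2}\,\trace_A(\rho_{AB}(X\otimes I))\,\rho_B^{-1/2}:X\in\mathcal{B}(\hilb{A})\}$, and $\mathcal{M}_R$ the analogous algebra built from $\rho_{BC}$ via $\trace_C$. The commutator relation above forces $\mathcal{M}_L$ and $\mathcal{M}_R$ to commute elementwise on $\mathrm{supp}(\rho_B)$. The structure theorem for finite-dimensional $C^{*}$-algebras then gives $\hilb{B}=\bigoplus_j \hilb{b_j^L}\otimes\hilb{b_j^R}$ (restricted to the support) in which every element of $\mathcal{M}_L$ acts as $M_j^L\otimes I_{b_j^R}$ within block $j$ and every element of $\mathcal{M}_R$ as $I_{b_j^L}\otimes M_j^R$. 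Pushing $\rho_{AB}$ and $\rho_{BC}$ through this decomposition and re-inserting into the Petz identity produces the asserted form $\rho_{ABC}=\bigoplus_j q_j\,\rho^{(j)}_{Ab_j^L}\otimes\rho^{(j)}_{b_j^R C}$.

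The main obstacle is the final step: translating the abstract commutation plus Petz identity into the explicit block tensor-product decomposition. This requires the classification of commuting finite-dimensional von Neumann subalgebras of $\mathcal{B}(\hilb{B})$ and some care in restricting to the supports of $\rho_{AB}$, $\rho_B$, $\rho_{BC}$ so that $\rho_B^{-1/2}$ is well defined; outside the support one simply extends by zero to recover the full statement.
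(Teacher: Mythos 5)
The paper itself does not prove this proposition: it is quoted verbatim from \cite{HJPW04} and used as a black box, so there is no in-paper argument to compare yours against; I can only judge the proposal on its own terms. Your ``if'' direction is correct and essentially complete: the reduced states inherit the block structure, $S\bigl(\bigoplus_j q_j\sigma_j\bigr)=H(q)+\sum_j q_jS(\sigma_j)$ for states with orthogonal supports, entropy is additive on tensor products, and the four terms of $I(A:C|B)$ cancel blockwise. Your architecture for the converse --- reduce $I(A:C|B)=0$ to saturation of monotonicity of relative entropy under $\trace_A$ for the pair $\bigl(\rho_{ABC},\,\rho_{AB}\otimes I_C/\dim\hilb{C}\bigr)$, invoke Petz's equality condition to get the recovery identity, then apply the structure theorem for finite-dimensional $*$-subalgebras of $\mathcal{B}(\hilb{B})$ --- is indeed the route taken in \cite{HJPW04}.

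There is, however, a genuine error at the pivot of your converse: the claimed commutation $[\rho_{AB}\otimes I_C,\,I_A\otimes\rho_{BC}]=0$ is false. Take $\rho_{ABC}=\rho_{AB}\otimes\rho_C$, which satisfies $I(A:C|B)=0$; the commutator is then $[\rho_{AB},\,I_A\otimes\rho_B]\otimes\rho_C$, which is nonzero for a generic $\rho_{AB}$ (e.g.\ a pure entangled state with non-uniform Schmidt spectrum). Neither of your justifications repairs this: the operator $\rho_{AB}^{1/2}\rho_B^{-1/2}\rho_{BC}\rho_B^{-1/2}\rho_{AB}^{1/2}$ is already self-adjoint, so comparing the Petz identity with its adjoint yields nothing, and $\rho_{AB}^{-it}$, $\rho_{BC}^{-it}$ do not act on $\hilb{B}$ alone, so the asserted ``commuting modular flows on $\hilb{B}$'' is not well formed. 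The conclusion you want --- that your algebras $\mathcal{M}_L$ and $\mathcal{M}_R$ commute --- happens to be true, but it must be extracted from the equality condition via a modular-flow invariance property of a suitable subalgebra of $\mathcal{B}(\hilb{B})$ (this is the technical heart of the argument in \cite{HJPW04}), not from a naive commutator of the density operators. Finally, the last step --- showing that $\rho_{AB}$ is block-diagonal with respect to the resulting central decomposition, that each block factorizes as $\rho^{(j)}_{Ab_j^L}\otimes\rho^{(j)}_{b_j^R}$, and that the weights $q_j$ arising from the $AB$ and $BC$ marginals coincide --- is where most of the remaining work lies and is only asserted. As written, the converse direction is an outline with a false lemma at its centre rather than a proof.
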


In \cite{LW05}, using this result, two of the present authors derived the general structure
of a state $\rho_{ABC}$ which saturates two separate instances of \eqref{SSA}
simultaneously, exactly the constraints of Theorem \ref{main}.
They found that such a state must have the form
\begin{equation}
\rho_{ABC}=\bigotimes_{i,j} p_{ij} \sigma_{a_i^L}^{(i)}\otimes\sigma_{a_i^Rb_j^L}^{(ij)}
                                     \otimes\sigma_{b_j^R}^{(j)}\otimes\sigma_C^{(k)}
\end{equation}
where, importantly, $k$ is a function only of $i$ and only of $j$, in the sense that
\begin{equation}
  k = k(i,j) = k_1(i) = k_2(j) \quad\text{ whenever }\quad p_{ij} > 0.
\end{equation}
In particular, $k$ need only be only defined where $p_{ij}>0$ so that it is
not necessarily constant. By collecting the terms of equivalent $k$ we can write
\begin{equation} \label{structure}
\rho_{ABC}=\bigoplus_k p_k \sigma_{AB}^{(k)}\otimes\sigma_C^{(k)},
\end{equation}
where
$p_k\sigma_{AB}^{(k)} = \sum_{i,j:k(i,j)=k}p_{ij}\sigma_{a_i^L}^{(i)}\otimes\sigma_{a_i^Rb_j^L}^{(ij)}
                                                    \otimes\sigma_{b_j^R}^{(j)}$.
We are now ready to proceed with the proof of the main theorem.

\begin{proof}[Proof of Theorem 1]
Since $I(A:C|B)_\rho=I(B:C|A)_\rho=0$, from the argument above we know that $\rho_{ABC}$ has the form \eqref{structure}.
The key ideas of the proof are the following three steps:
\begin{enumerate}[(1)]
\item ``Measure the value of $k$'' locally at $A$, without disturbing the state $\rho_{ABC}$, storing the result of this measurement in a classical register, $R$.
\item Observe that the entropies of our new $(n+4)$-party state, $\sigma$, satisfy many desirable properties, which allow us to derive new inequalities for $\sigma$ by methods analagous to \cite{Mak02}.
\item Relate these inequalities for $\sigma$ back to inequalities for $\rho$, using the fact that the measurement left many of the entropies unchanged.
\end{enumerate}

More precisely, we define $\hilb{A}^{(k)}=\bigoplus_{j:k_1(j)=k}\hilb{a_j^L}\otimes\hilb{a_j^R}$ and $P_k$ to be the projection operator onto $\hilb{A}^{(k)}$. We then perform the local projective measurement at A defined by the projectors $\{P_k\}$ and, conditional upon obtaining measurement outcome $i$, we prepare the state $\ket{i}\bra{i}$ in an ancilla, $R$, where $\{\ket{i}\}$ form an orthonormal basis of $\hilb{R}$. We then forget the value of $i$.

Let $p_k$ be the probability that outcome $k$ is obtained, and write $\sigma^{(k)}$ for the state on $ABCX_1\ldots X_n$ in this event. Then we can express the overall state of the system as:
\begin{equation}\label{sigmastructure}
\sigma=\sum_{k=1}^K p_k\sigma^{(k)} \otimes \ket{k}\bra{k}_R.
\end{equation}
From \eqref{structure}, and since the subspaces $\hilb{A}^{(k)}$ are orthogonal, it is clear that the reduced state $\sigma_{ABC}^{(k)}$ is equal to the state $\sigma_{AB}^{(k)}\otimes\sigma_C^{(k)}$ as defined previously, so that the $\sigma_{AB}^{(k)}$ and $\sigma_C^{(k)}$ of \eqref{structure} are indeed the appropriate reduced states of $\sigma^{(k)}$. It is also clear that the use of $p_k$ in \eqref{sigmastructure} is consistent with that in \eqref{structure}. This implies that $\sigma_{ABC}=\sum_{k=1}^Kp_k\sigma_{AB}^{(k)}\otimes\sigma_C^{(k)}=\rho_{ABC}$.

We now write $N=\{A,B,C,X_1,\ldots,X_n\}$ and observe that $\sigma$ exhibits the following properties:
\begin{enumerate}[(i)]
\item $S(RA)_\sigma-S(A)_\sigma=:S(R|A)_\sigma=S(R|B)_\sigma=0$;
\item For $J\subseteq N$ we have $S(JR)_\sigma\geq S(J)_\sigma$; \hfill(R-monotonicity)
\item $S(R)_\sigma\geq I(AB:C)_\sigma$.
\end{enumerate}
To see (ii) notice that the structure of $\sigma$ given in equation \eqref{sigmastructure}
implies, for any $J\subseteq N$,
\begin{equation}
S(RJ)_\sigma=H(\boldvec{p})+\sum_{k=1}^K p_kS(J)_{\sigma^{(k)}}\geq S(J)_\sigma,
\end{equation}
where $H$ is the Shannon entropy, $\boldvec{p}=(p_1,\ldots,p_K)$, and the inequality follows from
\cite[Thm. 11.10]{NC00}. If $A\in J$ or $B\in J$ then the inequality becomes an
equality since the $\sigma_J^{(k)}$ are supported on orthogonal subspaces.
This proves (i). Finally, (iii) follows from
\begin{equation}
  S(ABC)_\sigma-S(AB)_\sigma=\sum_{k=1}^Kp_kS(C)_{\sigma^{(k)}}\geq S(C)_\sigma-H(\boldvec{p}),
\end{equation}
and the fact that $S(R)_\sigma=H(\boldvec{p})$.

Using these properties we follow an argument similar to that used in
\cite{Mak02} for classical entropies. Notice the following chain of inequalities:
\begin{equation}\begin{split}
S(R|X_i)_\sigma&=S(RX_i)_\sigma-S(X_i)_\sigma+S(ABX_i)_\sigma-S(ABX_i)_\sigma \\
&\leq S(RX_i)_\sigma+S(RABX_i)_\sigma-S(X_i)_\sigma-S(ABX_i)_\sigma \\
&\leq S(RAX_i)_\sigma+S(RBX_i)_\sigma-S(X_i)_\sigma-S(ABX_i)_\sigma \\
&=S(R|AX_i)_\sigma+S(R|BX_i)_\sigma+I(A:B|X_i)_\sigma \\
&\leq S(R|A)_\sigma+S(R|B)_\sigma+I(A:B|X_i)_\sigma \\
&= I(A:B|X_i)_\sigma.
\end{split}\end{equation}
In the second line we used R-monotonicity, in the third and fifth lines we used strong subadditivity and for the final equality we used property (i). This implies
\begin{equation}\begin{split}
S(X_i|R)_\sigma+S(R)_\sigma &=    S(X_i)_\sigma+S(R|X_i)_\sigma \\
                            &\leq S(X_i)_\sigma+I(A:B|X_i)_\sigma.
\end{split}\end{equation}
Summing over all $i$ we obtain
\begin{equation}
  \label{dagger}
  \sum_{i=1}^n S(X_i|R)_\sigma + nS(R)_\sigma
      \leq \sum_{i=1}^n S(X_i)_\sigma + \sum_{i=1}^n I(A:B|X_i)_\sigma.
\end{equation}
Our aim here is to form inequalities for $\sigma$ which can be related back to $\rho$,
and so we need to eliminate system $R$. To this end, observe that
\begin{equation}
S(X_1\ldots X_n)_\sigma\leq S(RX_1\ldots X_n)_\sigma\leq S(RX_1\ldots X_{n-1})_\sigma+S(X_n|R)_\sigma,
\end{equation}
by R-monotonicity and SSA. Applying the second inequality here inductively we obtain
\begin{equation}
S(X_1\ldots X_n)_\sigma \leq \sum_{i=1}^n S(X_i|R)_\sigma + S(R)_\sigma,
\end{equation}
which we can substitute into \eqref{dagger} to give
\begin{equation}
S(X_1\ldots X_n)_\sigma + (n-1)S(R)_\sigma \leq \sum_{i=1}^n S(X_i)_\sigma + \sum_{i=1}^n I(A:B|X_i)_\sigma.
\end{equation}
Finally, applying property (iii) yields
\begin{equation}
S(X_1\ldots X_n)_\sigma + (n-1)I(AB:C)_\sigma
       \leq \sum_{i=1}^n S(X_i)_\sigma + \sum_{i=1}^n I(A:B|X_i)_\sigma.
\end{equation}
We have shown that the new inequalities holds for the state $\sigma$, the state after the measurement, but it remains to prove that they hold for the general state $\rho$. However, this is straightforward. Indeed, since the measurement was local at $A$ it did not alter the state on $X_1\ldots X_n$ and hence $S(X_1\ldots X_n)_\sigma=S(X_1\ldots X_n)_\rho$ and $S(X_i)_\sigma=S(X_i)_\rho$ for all $i$. Likewise, since the measurement did not affect the state on $ABC$ we must have $I(AB:C)_\sigma=I(AB:C)_\rho$. Finally, since the conditional mutual information is monotone decreasing under local maps, we must have $I(A:B|X_i)_\sigma\leq I(A:B|X_i)_\rho$. Putting all this together we obtain the result:
\begin{equation}
  S(X_1\ldots X_n)_\rho+(n-1)I(AB:C)_\rho\leq \sum_{i=1}^n S(X_i)_\rho+\sum_{i=1}^n I(A:B|X_i)_\rho,
\end{equation}
as advertised.
\end{proof}

\section{Independence of the Inequalities}\label{independence}
In the previous section we proved that certain constrained inequalities hold,
however, we have not yet justified why this result is interesting.
Let $C_n$ denote the constrained inequality of Theorem \ref{main} for a given value of $n$.
If we simplify $C_1$ then most of the terms cancel and we are left with the inequality:
\begin{equation}
I(A:B|X_1)\geq 0,
\end{equation}
which is simply one of the basic inequalities.
With this in mind, one may suspect that we have not proved anything new.

Let $n\geq 2$ be an arbitrary, but fixed, integer. We aim to show that
$C_n$ is independent of the basic inequalities and of $\{C_p\}_{p\neq n}$.
To do this we must show that $C_n$ cannot be written as a positive linear
combination of instances of $\{C_p\}_{p\neq n}$ and basic inequalities,
together with a negative linear combination of the constraints.
Let $N=\{a,b,c,x_1,\ldots,x_n\}$. Our approach will be to find a function
$g:\powset{N}\rightarrow\real$ which is submodular, monotonic, satisfies the
constraints $g(a:c|b)=g(b:c|a)=0$ and satisfies all instances of $\{C_p\}_{p\neq n}$,
but which violates $C_n$. Here we use the notation
$h(\alpha:\beta|\gamma)=-h(\gamma)+h(\alpha\cup\gamma)+h(\beta\cup\gamma)-h(\alpha\cup\beta\cup\gamma)$ for any function $h:\powset{N}\rightarrow \real$
where $\alpha,\beta,\gamma$ are disjoint subsets  of $N$.

Notice that in section \ref{result} we considered $C_p$ only as an inequality on $p+3$ parties. However, following the argument of section \ref{notation} we see that, for any $p$, there are instances of $C_p$ on $N$. This is because we can always choose $p+3$ disjoint subsets of $N$, though, of course, for $p>n$ some of these are necessarily empty.

We begin by introducing some notation. We keep $n\geq 2$ fixed, and let $N_1=\{x_1,\ldots,x_n\}$ and $N_2=\{a,b,c\}$ so that $N=N_1\dotcup N_2$. (In this section we will use lower case roman letters to represent singletons, and capital roman letters or greek letters to represent subsets of $N$). It is clear that each subset $M\subseteq N$ has a unique decomposition $M=J\dotcup K$
(we will usually write $M=JK$) with $J\subseteq N_1$ and $K\subseteq N_2$.
We can then define a function $f:\powset{N}\rightarrow \real$ by
\begin{equation}
  f(JK)=\theta_K + |J|\lambda_K -\mu_{JK},
\end{equation}
for some constants $\theta_K,\lambda_K,\mu_{JK}\in\real$, where $|J|$ is the size of the set $J$. The particular values of $\theta$ and $\lambda$ are as follows.
\begin{align}
\theta &= \left(\begin{array}{ccc}
                   & \theta_{abc} & \\
       \theta_{ab} & \theta_{ac}  & \theta_{bc} \\
          \theta_a & \theta_b     & \theta_c \\
                   &\theta_\emptyset&
                \end{array}\right)     \nonumber\\
       &:= (n+1)\left(\begin{array}{ccc}
                       & 2n^3+8n^2+4n-1 & \\
        2n^3+8n^2+4n-1 & 2n^3+5n^2+2n   & 6n^2+4n-1 \\
             2n^3+5n^2 & 4n^2+2n-1      & 3n^2+n \\
                       & 0              &
                      \end{array}\right), \\
\lambda &:= - \left(\begin{array}{ccc}
                       & 2n^3+8n^2+4n-1 & \\
        2n^3+8n^2+4n-1 & 2n^3+5n^2+2n   & 6n^2+4n-1\\
          2n^3+5n^2+2n & 4n^2+2n-1      & 4n^2+2n-1\\
                       & n^2            &
                    \end{array} \right), \label{lambdadef}
\end{align}
and the only non-zero values of $\mu_{JK}$ are
\begin{align}
  \mu_a    &= 2n^2(n+1), \\
  \mu_{ab} &= 2n(n+1)^2.
\end{align}

\begin{prop}
f is a submodular function.
\end{prop}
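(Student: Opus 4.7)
The plan is to verify the submodular inequality
\[
\Delta(\alpha,\beta,\gamma) := -f(\gamma)+f(\alpha\gamma)+f(\beta\gamma)-f(\alpha\beta\gamma) \geq 0
\]
directly, for every triple of disjoint $\alpha,\beta,\gamma\subseteq N$. The natural starting point is the canonical decomposition $\alpha = J_\alpha K_\alpha$, $\beta=J_\beta K_\beta$, $\gamma=J_\gamma K_\gamma$ with $J_\bullet\subseteq N_1$ and $K_\bullet\subseteq N_2$, and the corresponding splitting $f = f_\theta + f_\lambda + f_\mu$ where $f_\theta(JK)=\theta_K$, $f_\lambda(JK)=|J|\lambda_K$, and $f_\mu(JK)=-\mu_{JK}$. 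Each of the three parts contributes to $\Delta$ in a structurally different way, and I would analyse them separately before recombining.

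The $f_\theta$-contribution is a function of $K_\alpha, K_\beta, K_\gamma$ only; the $f_\lambda$-contribution is linear in $|J_\alpha|,|J_\beta|,|J_\gamma|$ with coefficients $\lambda_{K_\alpha K_\gamma}-\lambda_{K_\alpha K_\beta K_\gamma}$, $\lambda_{K_\beta K_\gamma}-\lambda_{K_\alpha K_\beta K_\gamma}$, and the $\lambda$-SSA combination for $K_\gamma$; and the $f_\mu$-contribution is an explicit finite correction, since $\mu_{JK}$ vanishes except at the two values $(J,K) = (\emptyset,\{a\})$ and $(\emptyset,\{a,b\})$. Because $N_2$ has only three elements, the triple $(K_\alpha,K_\beta,K_\gamma)$ of disjoint subsets ranges over a modest, easily enumerated list of patterns (each element of $N_2$ lies in exactly one of $\alpha$, $\beta$, $\gamma$, or in none of them), and the symmetry $\alpha\leftrightarrow\beta$ roughly halves these.

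For each such pattern I would plug in the explicit matrices for $\theta$ and $\lambda$ and turn $\Delta(\alpha,\beta,\gamma)$ into a polynomial in $n$ and the three integers $|J_\alpha|,|J_\beta|,|J_\gamma|\in\{0,1,\ldots,n\}$. The $\mu$-correction can only be activated when one of the sets $\gamma$, $\alpha\gamma$, $\beta\gamma$, $\alpha\beta\gamma$ equals $\{a\}$ or $\{a,b\}$ (so in particular the relevant $J$-part is empty), and it is easy to track which of the four terms in $\Delta$ picks up which correction. The expected outcome is that, in each pattern, the $\theta$-part alone already provides a positive quantity of order $n^3$ that dominates the (signed) linear-in-$|J|$ contribution from $\lambda$ and the $O(n^3)$ correction from $\mu$, so $\Delta\geq 0$ follows from a polynomial inequality in $n$ that is manifest for $n\geq 2$.

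The main obstacle is not conceptual but bookkeeping: the $\lambda_K$ are all negative while the $\theta_K$ grow cubically in $n$, so one must keep careful track of which terms dominate, and the small number of patterns in which a $\mu$-correction is activated need to be handled by hand. Nothing in the argument uses quantum-mechanical or combinatorial properties beyond the explicit values of $\theta,\lambda,\mu$; the whole proposition is really a verification that the constants displayed in \eqref{lambdadef} and the surrounding formulas were chosen to make each of the finitely many polynomial inequalities non-negative.
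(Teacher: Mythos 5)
Your setup is sound: decomposing $f=f_\theta+f_\lambda+f_\mu$, enumerating the finitely many patterns of $(K_\alpha,K_\beta,K_\gamma)$ on the three-element set $N_2$, and observing that the remainder is affine in $|J_\alpha|,|J_\beta|,|J_\gamma|$ with a $\mu$-correction supported only where the relevant $J$-part is empty. Two remarks, one organisational and one substantive. Organisationally, you are checking all disjoint triples, whereas the paper first invokes the standard reduction (cited to Pippenger) that submodularity is equivalent to non-negativity of the \emph{elemental} expressions $f(i:j|\alpha)$ with $i,j$ single parties; this collapses the case analysis to three types (both of $i,j$ in $N_2$, one in each, both in $N_1$) and is what makes the verification short. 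Your route is valid but considerably heavier, and you have not actually carried out the enumeration.

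The substantive gap is that your predicted mechanism for why the check succeeds is wrong, so the plan as stated would not close. It is not true that ``in each pattern the $\theta$-part alone provides a positive quantity of order $n^3$ that dominates.'' When $K_\alpha=K_\beta=\emptyset$ the $\theta$- and $\lambda$-contributions vanish identically and non-negativity rests entirely on the sign of the $\mu$-correction (e.g.\ $f(x_1:x_2|a)=+2n^2(n+1)$ comes purely from $\mu_a$). More importantly, the constants are tuned so that several instances are \emph{exactly zero}: $f(a:c|b)=f(b:c|a)=0$ (these are the imposed constraints, so they cannot be slack), and $f(a:x_1|b)=0$; in these cases $\theta(r:s|K)+|J|\lambda(r:s|K)$ is cancelled to the last term by $\mu$, with no cubic margin left over. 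So the argument cannot be reduced to ``$\theta$ dominates''; one genuinely has to verify, as the paper does, that (a) $\theta(r:s|K)+|J|\lambda(r:s|K)\geq 0$ for all $|J|\leq n$ (which at $|J|=n$ is tight up to lower order, not dominated), (b) $\lambda_K-\lambda_{rK}\geq 0$, and (c) the finitely many instances where $\mu$ is activated are individually non-negative. Since the proposition \emph{is} the finite computation, a proof must exhibit these checks rather than forecast their outcome --- and the forecast given here is falsified by the zero instances.
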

\begin{proof}
In \cite{Pip03} it is shown that a function $f$ is submodular if and only if all expressions of the
form $f(i:j|\alpha)$ are nonnegative, where $i,j\in N$ distinct elements,
and $\alpha\subseteq N\setminus\{i,j\}$. By considering whether $i$ and $j$ belong to
$N_1$ or $N_2$, and setting $\alpha=JK$, we arrive at three different kinds of expression:
\begin{enumerate}[(i)]
\item $f(r:s|JK)=\theta(r:s|K)+|J|\lambda(r:s|K)-\mu(r:s|JK)$;
\item $f(r:x_1|JK)=\lambda_K-\lambda_{rK}-\mu(r:x_1|JK)$;
\item $f(x_1:x_2|JK)=-\mu(x_1:x_2|JK)$;
\end{enumerate}
where $x_1,x_2\in N_1$, $r,s\in N_2$ are distinct, but otherwise arbitrary, and, for
example, $\theta(a:b|c)=-\theta_c+\theta_{ac}+\theta_{bc}-\theta_{abc}$.

By direct computation we find
\begin{align}
\theta(a:b|\emptyset) &= n(n+1)(n-2)  &\qquad \lambda(a:b|\emptyset) &= 0, \nonumber\\
\theta(a:c|\emptyset) &= n(n+1)(3n-1) &       \lambda(a:c|\emptyset) &= -(n+1)(3n-1), \nonumber\\
\theta(b:c|\emptyset) &= n(n+1)(n-1)  &       \lambda(b:c|\emptyset) &= -(n+1)(n-1), \nonumber\\
\label{theta}
\theta(a:b|c)& =n(n+1)                &       \lambda(a:b|c) &= (n+1)(n-1), \\
\theta(a:c|b)& =2n(n+1)^2             &       \lambda(a:c|b) &= -2n(n+1), \nonumber\\
\theta(b:c|a)& =2n(n+1)               &       \lambda(b:c|a) &= 0, \nonumber
\end{align}
and that the only non-zero values of $\mu(i:j|\alpha)$ are
\begin{align}
\mu(b:c|a)=\mu(b:x_1|a)=-\mu(a:b|\emptyset)                           &=2n(n+1),  \\
\mu(a:c|\emptyset)=\mu(a:x_1|\emptyset)=-\mu(c:x_1|a)=-\mu(x_1:x_2|a) &=2n^2(n+1), \\
\mu(a:c|b)=\mu(a:x_1|b)=-\mu(x_1:x_2|ab)=-\mu(c:x_1|ab)               &=2n(n+1)^2.
\end{align}
We now demonstrate in turn the non-negativity of the three different types of expressions,
(i), (ii) and (iii).
\begin{enumerate}[(i)]
\item From \eqref{theta} it follows that always $\theta(r:s|K)+|J|\lambda(r:s|K)\geq0$, so it only remains to check those cases with $\mu(r:s|JK)>0$. We find that $f(b:c|a)=f(a:c|b)=0$, and that $f(a:c|\emptyset)=n(n+1)(n-1)>0$ since $n\geq 2$.
\item From \eqref{lambdadef} we can check that always $\lambda_K-\lambda_{rK}\geq0$, so that we only have to consider cases with $\mu(r:x_1|JK)>0$. We find that $f(b:x_1|a)=(n+1)(n-1)$, $f(a:x_1|b)=0$ and $f(a:x_1|\emptyset)=2n(n+1)$.
\item Notice that all expressions of the form $\mu(x_1:x_2|JK)$ are non-positive.
\end{enumerate}

This concludes the proof.
\end{proof}

\begin{prop}
For all integer $p\geq2$, $f$ satisfies (all instances of) $C_p$, except for
$C_n$, which $f$ violates: to be precise, it violates the ``standard'' instance
on $n+3$ parties.
\end{prop}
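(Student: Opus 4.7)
The plan is to prove the proposition in two parts: first, to exhibit the violation of $C_n$ at the standard instance; second, to verify that every instance of $C_p$ holds for $p\geq 2$, $p\neq n$.

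For the first part, I would directly substitute the standard assignment $A=\{a\},\,B=\{b\},\,C=\{c\},\,X_i=\{x_i\}$ into $C_n$ and evaluate both sides using $f(JK)=\theta_K+|J|\lambda_K-\mu_{JK}$. The constraints $f(a{:}c|b)=f(b{:}c|a)=0$ hold by case (i) of the preceding submodularity proof (these were exactly the boundary cases in which the $\theta,\lambda$ contribution cancels the $\mu(\cdot)$ contribution). The individual terms $f(x_i)$, $f(x_1\cdots x_n)$, $f(ab{:}c)$, and $f(a{:}b|x_i)$ each collapse to clean polynomials in $n$, and assembling them one finds that LHS$-$RHS factors as $n(n+1)[(n-1)^2-n(n-2)]=n(n+1)$, strictly positive for $n\geq 2$, establishing the violation.

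For the second part, I would parametrise an arbitrary instance of $C_p$ by the disjoint distribution of the three ``coloured'' elements $a,b,c\in N_2$ and the $n$ ``plain'' elements $x_j\in N_1$ to the $p+3$ subsets $A,B,C,X_1,\ldots,X_p$ (any of which may be empty). The $JK$-decomposition turns each term of $C_p$ into an explicit polynomial in the cardinalities $|J_A|,|J_B|,|J_C|,|J_1|,\ldots,|J_p|$ plus a sparse $\mu$-correction, since $\mu_{JK}$ is supported only on $(\emptyset,\{a\})$ and $(\emptyset,\{a,b\})$. Because of this sparseness, the two constraints $f(A{:}C|B)=f(B{:}C|A)=0$ admit only a very short list of placements of $a,b,c$ among $A,B,C$: any other placement leaves an uncancelled $\mu(\cdot)$ term, as can be read off from the non-zero-$\mu$ catalogue compiled in the submodularity proof. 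For each admissible configuration and each valid allocation of the $x_j$'s among the $X_i$'s, I would verify LHS$\leq$RHS of $C_p$ by an algebraic reduction analogous to the one in part one.

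The main obstacle is the case analysis, together with tracking how the two non-zero values $\mu_a$ and $\mu_{ab}$ propagate through $f(AB{:}C)$, $f(A{:}B|X_i)$, and $f(X_1\cdots X_p)$ without sign errors, and handling the boundary cases (empty $X_i$, or $X_i$ containing elements of $N_2$). I expect the answer in each case to be a polynomial identity in which LHS$-$RHS of $C_p$ factors cleanly in $p$ and the cardinalities involved, with a sign that is non-positive for every $p\neq n$ and becomes strictly positive precisely at $p=n$. This sign pattern is the algebraic mechanism dictated by the specific numerical values chosen for $\theta$, $\lambda$ and $\mu$: the $\mu$-corrections contribute exactly the $n^2+n$ surplus that tips the inequality over, and only when $p$ equals $n$.
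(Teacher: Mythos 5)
Your first part is correct and agrees with the paper: substituting the standard instance and using $f(a{:}b|x_i)=n(n+1)(n-2)$, $f(ab{:}c)=n(n+1)(n-1)$, together with the fact that $f$ is additive on $N_1$ (so $\sum_i f(x_i)-f(x_1\cdots x_n)=0$), gives exactly LHS $-$ RHS $= n(n+1)\bigl[(n-1)^2-n(n-2)\bigr]=n(n+1)>0$, which is the paper's computation $L=-n(n+1)$.

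The second part, however, has a genuine gap: you never identify the combinatorial mechanism that makes $C_p$ hold for $p\neq n$, and the mechanism you do name is not the right one. The sign of an instance is \emph{not} a function of $p$ alone. Writing $\Delta$ for the number of empty sets among the $X_i$, one finds (using $f(a{:}b|\emptyset)-f(a{:}b|\alpha)=2n(n+1)$ for $\alpha\neq\emptyset$) that RHS $-$ LHS $= n(n+1)(n-p-1+2\Delta)$, so a violation requires $p>n-1+2\Delta$. The decisive step is then a pigeonhole count: since the $X_i$ are $p$ disjoint subsets of the $n$-element set $N_1$, at most $n$ of them are nonempty, so $\Delta\geq p-n$; substituting this forces $p=n$ and $\Delta=0$ as the only violating configuration. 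In particular, for $p=n$ itself almost all instances (those with some empty $X_i$) still satisfy $C_n$ — only the standard one fails. Your claimed mechanism (``the $\mu$-corrections contribute exactly the $n^2+n$ surplus \ldots only when $p$ equals $n$'') misattributes the effect: the $\mu$'s enter only through fixed constants like $f(ab{:}c)$ and the constraints, while the $p$-dependence is carried entirely by the $\theta,\lambda$ values and the count $\Delta$. Without the bound $\Delta\geq p-n$ you cannot rule out violations for $p>n$, so the proof is incomplete as proposed. Separately, your plan to enumerate all placements of $a,b,c$ among $A,B,C,X_1,\ldots,X_p$ is unnecessary and creates an unmanaged case explosion: by the paper's definition of independence only instances whose constraints match those of the violated instance of $C_n$ need be checked, which pins $A=\{a\}$, $B=\{b\}$, $C=\{c\}$ and reduces everything to the choice of disjoint $\alpha_i\subseteq N_1$.
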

\begin{proof}
First notice that $f$ satisfies the constraints $f(b:c|a)=f(a:c|b)=0$.
Since $C_p$ are constrained inequalities, we now fix the constrained parties,
and so consider instances of $C_p$ of the form
\begin{equation}
  L := \sum_{i=1}^pf(a:b|\alpha_i)-(p-1)f(ab:c|\emptyset)+\sum_{i=1}^pf(\alpha_i)-f(\alpha_1\ldots\alpha_p),
\end{equation}
where $\alpha_1,\ldots,\alpha_p$ are disjoint subsets of $N_1$. Observe the following
\begin{align}
f(ab:c|\emptyset) &= n(n+1)(n-1), \\
f(a:b|\emptyset)  &= n^2(n+1),    \\
f(a:b|\alpha)     &= n(n+1)(n-2) \quad\text{ for } \alpha\neq\emptyset.
\end{align}
We must now check that for all choices of $p$ and $\alpha_1,\ldots,\alpha_p$, $C_p$ is satisfied,
i.e.~$L\geq 0$, except sometimes when $p=n$.
First notice that always $\sum_{i=1}^m f(\alpha_i) - f(\alpha_1\ldots\alpha_m)=0$ and hence
\begin{equation}
  L = pn(n+1)(n-2)-(p-1)n(n+1)(n-1)+2\Delta n(n+1),
\end{equation}
where $\Delta$ denotes the number of empty sets among the $\alpha_i$.
Rearranging we obtain
\begin{equation}
  L = n(n+1)(n-p-1+2\Delta),
\end{equation}
which is negative precisely when $p>n-1+2\Delta$.
This certainly means that we must have $p\geq n$ in order for $f$ to violate $C_p$.
Let $p=n+k$. For a violation, we require that $k+1>2\Delta$, however, we must have at
least $k$ empty $\alpha_i$. Therefore, our condition becomes $k+1>2k$ and the only
violation of $C_p$ occurs when $k=0$, $p=n$ and $\Delta=0$.
\end{proof}

For any $n\geq2$ we have found a submodular function $f$ which violates $C_n$ and satisfies $C_p$ for all $p\neq n$. However, we do not know that $f$ satisfies all the basic inequalities, as it could still violate weak monotonicity. However, all the inequalities we have dealt with so far satisfy the following property: for all $i\in N$
\begin{equation}
\sum_{\omega\in\powset{N}:i\in\omega}\chi_\omega=0,
\end{equation}
where the constants $\chi_\omega$ are as defined in \eqref{ineqdef}.
Inequalities with this property are called \emph{balanced} \cite{Cha03}.
This allows us to define a new function $g:\powset{N}\rightarrow\real$ by
\begin{equation}
  g(M)=f(M)+\sum_{i\in M} c_i,
\end{equation}
for some constants $c_i\in\real$. Notice that for any balanced expression the terms involving $c_i$ will cancel, and so $g$ will take the same value as $f$. This means that for all possible values of $c_i$, $g$ will still be submodular, and will still violate $C_n$ and satisfy $C_p$ for $p\neq n$. In particular, if we choose $c_i=\max_{\alpha\subseteq\beta}\{f(\alpha)-f(\beta)\}$ for all $i$ then $g$ will be monotonic, and so certainly satisfy weak monotonicity. (Note that the $c_i$ are chosen such that if $f$ is
already monotonic, then $f=g$.) The above reasoning in particular shows that the classical
inequalities of Makarychev \emph{et al.} \cite{Mak02} are independent, a fact that was
not shown in the original paper.

\section{Variants of the Inequalities} \label{variations}
In this section we present some alternate forms of the inequalities proved in
section \ref{result}. Firstly, we can consider Theorem \ref{main} when $\rho$
is a pure state on parties $A,B,C,X_1,\ldots,X_{n+1}$. Then,
since $S(J)_\rho=S(J^c)_\rho$, we can eliminate system
$X_{n+1}$ from the inequality, and obtain the following theorem.
\begin{thm1'}
Let $\rho$ be a multipartite quantum state on parties $\{A,B,C,X_1,\ldots,X_n\}$ which satisfies the constraints:
\begin{equation} \label{constraints}
I(A:C|B)_{\rho}=I(B:C|A)_{\rho}=0.
\end{equation}
Then,
\begin{equation}\begin{split}
\sum_{i=1}^nI(A:B|X_i)_\rho &+ \sum_{i=1}^nS(X_i)_\rho+I(A:B|CX_1\ldots X_n)_\rho \\
                            &+ S(ABCX_1\ldots X_n)_\rho-S(ABC)_\rho-nI(AB:C)_\rho \geq 0.
\end{split}\end{equation}
\end{thm1'}

There is a further transformation we can make to Theorem \ref{main}.
Looking through the proof, at all stages we can allow $C$ to play the role of one of the
$X_i$. If we do this we obtain another family of constrained inequalities, and
their purified versions.
\begin{thm}
Let $\rho$ be a multipartite quantum state on parties $\{A,B,C,X_1,\ldots,X_n\}$ which satisfies \eqref{constraints}. Then:
\begin{equation}\begin{split}
\sum_{i=1}^nI(A:B|X_i)_\rho+\sum_{i=1}^nS(X_i)_\rho &+ I(A:B|C)_\rho+S(C)_\rho\\
                                                    &- S(CX_1\ldots X_n)_\rho-nI(AB:C)_\rho \geq 0.
\end{split}\end{equation}
\end{thm}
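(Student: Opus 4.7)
The plan is to recycle the construction from the proof of Theorem \ref{main} essentially verbatim, treating $C$ as an $(n+1)$-th system of the $X_i$ type throughout. The structural result on $\rho_{ABC}$ derived from the constraints \eqref{constraints} does not involve the $X_i$, so I can again perform the same local projective measurement at $A$ onto the subspaces $\hilb{A}^{(k)}$, store the outcome in an ancilla $R$, and obtain the state $\sigma$ on $\{A,B,C,X_1,\ldots,X_n,R\}$ satisfying properties (i)--(iii) of the previous proof. Nothing about these properties singles out the $X_i$ from $C$: in particular, property (ii) (R-monotonicity) and property (iii) ($S(R)_\sigma \geq I(AB:C)_\sigma$) are stated for arbitrary subsets $J \subseteq N$.

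Next, I would observe that the six-line chain yielding $S(R|X_i)_\sigma \leq I(A:B|X_i)_\sigma$ in the proof of Theorem \ref{main} used only R-monotonicity, two invocations of strong subadditivity, and property (i) (which constrains $R$ relative to $A$ and $B$ only). Since $C$ is disjoint from $A$, $B$, and $R$, the same chain applies with $C$ substituted for $X_i$, yielding
\begin{equation}
S(C|R)_\sigma + S(R)_\sigma \leq S(C)_\sigma + I(A:B|C)_\sigma.
\end{equation}
Summing this together with the $n$ analogous inequalities for $X_1,\ldots,X_n$ gives
\begin{equation}
\sum_{i=1}^n S(X_i|R)_\sigma + S(C|R)_\sigma + (n+1)S(R)_\sigma \leq \sum_{i=1}^n S(X_i)_\sigma + S(C)_\sigma + \sum_{i=1}^n I(A:B|X_i)_\sigma + I(A:B|C)_\sigma.
\end{equation}

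Then I would repeat the R-monotonicity-plus-SSA telescoping, but now on all $n+1$ systems $C, X_1, \ldots, X_n$, to obtain
\begin{equation}
S(CX_1\ldots X_n)_\sigma \leq \sum_{i=1}^n S(X_i|R)_\sigma + S(C|R)_\sigma + S(R)_\sigma.
\end{equation}
Substituting this into the previous display cancels the $\sum S(X_i|R) + S(C|R)$ terms and leaves an $nS(R)_\sigma$ on the left, which I then lower-bound using property (iii) to get $nI(AB:C)_\sigma$. Rearranging produces exactly the target inequality for $\sigma$.

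Finally, to pass from $\sigma$ to $\rho$, I would use that the measurement was local at $A$: the entropies $S(X_i)_\rho$, $S(C)_\rho$, $S(CX_1\ldots X_n)_\rho$ and the quantity $I(AB:C)_\rho$ are all unchanged, while $I(A:B|X_i)_\sigma \leq I(A:B|X_i)_\rho$ and $I(A:B|C)_\sigma \leq I(A:B|C)_\rho$ by monotonicity of conditional mutual information under local operations. Since these latter terms appear with positive coefficient and $I(AB:C)$ with negative coefficient, the inequality is preserved. The main ``obstacle'' is really just bookkeeping: verifying that $C$ never plays the role of $A$, $B$, or $R$ in any of the SSA or conditional-entropy-monotonicity applications of the original proof. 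A quick inspection of the chain shows this is the case, which is why the paper's hint can get away with the single sentence ``$C$ can play the role of one of the $X_i$.''
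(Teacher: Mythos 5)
Your proposal is correct and is exactly the argument the paper intends: Theorem 5 is the instance of Theorem \ref{main} with $C$ adjoined as an $(n+1)$-th conditioning system, and your verification that $C$ only ever appears in positions where disjointness from $A$, $B$, and $R$ is what matters (so that properties (i)--(iii) and the SSA/telescoping steps go through unchanged, with $(n+1)-1=n$ copies of $S(R)\geq I(AB:C)$ surviving) is precisely the bookkeeping the paper compresses into the sentence ``we can allow $C$ to play the role of one of the $X_i$.'' The passage back from $\sigma$ to $\rho$ is also handled correctly, including the sign check on the terms that only satisfy an inequality rather than an equality.
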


\begin{thm2'}
Let $\rho$ be a multipartite quantum state on parties $\{A,B,C,X_1,\ldots,X_n\}$ which satisfies \eqref{constraints}. Then:
\begin{equation}\begin{split}
\sum_{i=1}^nI(A:B|X_i)_\rho+\sum_{i=1}^nS & (X_i)_\rho+I(A:B|CX_1\ldots X_n)_\rho+S(ABCX_1\ldots X_n)_\rho\\
                                          &+ I(A:B|C)_\rho+S(C)_\rho-S(AB)_\rho-(n+1)I(AB:C)_\rho \geq 0.
\end{split}\end{equation}
\end{thm2'}

Previously, the only known constrained inequality for the von Neumann entropy was
found in \cite{LW05}. There it was shown that any $4$-party quantum state satisfying the
constraints $I(A:C|B)=I(B:C|A)=I(A:B|D)=0$ also satisfies the inequality
$I(C:D)\geq I(AB:C)$. Choosing $n=1$ in the forms of the theorem above, we obtain
three new four-party constrained inequalities. Since our inequalities use two of
the same constraints as \cite{LW05}, but not the other, we might expect them to be
strictly stronger, meaning we could rederive the previous result. Curiously,
however, this turns out not to be the case. Indeed, consider the function
$e:\powset{ABCD}\rightarrow\real$ given in the following table, which satisfies
the basic inequalities:

\bigskip\noindent
\begin{center}
  {\begin{tabular}{c||c|c|c|c|c|c|c|c|c|c|c|c|c|c|c|c}
  &\ $\emptyset$~&\ A~&\ B~&\ C~&\ D~&\ AB&\ AC&\ AD&\ BC&\ BD&\ CD&ABC&ABD&ACD&BCD&ABCD\\ \hline\hline
  $e$&0&5&5&2&4&6&5&5&5&5&6&6&6&5&5&4\\
\end{tabular}}
\end{center}

\bigskip\noindent
We can easily check that $e$ satisfies all the constraints of the old inequality,
however, $e(C:D|\emptyset)-e(AB:C|\emptyset)=-2<0$. Therefore, $e$ violates the old inequality.
On the other hand, each of our new four party constrained inequalities is satisfied
by $e$. Hence the old inequality is independent of the new ones.

\section{Discussion}
\label{discussion}
We have shown that an infinite family of independent inequalities hold for the von Neumann entropy. However, since these inequalities are constrained they only reveal information about the boundary of $\Sigma_n^*$, and so they are still not enough to conclude that $\overline{\Sigma_n^*}\subsetneq\Sigma_n$.

Towards this end, we note that the inequalities we proved are the same as those proved in section 3 of \cite{Mak02} for the Shannon entropy, and that our proof follows a similar outline. In \cite{Mak02} a similar family of \emph{unconstrained} inequalities for the Shannon entropy are also proved, using a method which, in some sense, generalises the constrained proof. It may be possible that this proof can be generalised to apply to the von Neumann entropy, however, it seems as though some new tools would have to be developed first.

In any case, we believe it possible that a deeper connection between classical and quantum entropy inequalities exists. We have tested many of the new non-Shannon type inequalities on quantum states, using a numerical optimisation program, and have not been able to find a single violation (although limits on processing power restrict us to Hilbert spaces with small local dimensions). We also note that all these new inequalities are balanced, and that the only entropy inequality which is known to be true in the classical but not in the quantum case, is monotonicity -- which is unbalanced.

We therefore are led to speculate whether, in fact, \emph{all} balanced
inequalities that hold for the Shannon entropy also hold for the von Neumann entropy.

\acknowledgments
We thank Beth Ruskai and Franti\v{s}ek Mat\'{u}\v{s} for discussions on information
inequalities.
The authors acknowledge support by the U.K.~EPSRC, the European Commission
(STREP project ``QCS''), the ERC (Advanced Grant ``IRQUAT''),
the Royal Society and the Philip Leverhulme Trust.

\bibliographystyle{unsrt}
\bibliography{References}

\end{document}